\newtheorem{defi}{Definition}
\newtheorem{teo}{Theorem}
\newtheorem{prop}[teo]{Proposition}
\newtheorem{rmk}{Remark}
\newcommand{\Sc}{\mathrm{Sc}}
\begin{document}
	
	\date{}
	
	\title{Modified Laplace-Beltrami quantization of natural Hamiltonian systems with quadratic constants of motion}
	\author{Claudia Maria Chanu, 
		Luca Degiovanni, Giovanni Rastelli \\  Dipartimento di Matematica, \\ Universit\`a di Torino.  Torino, via Carlo Alberto 10, Italia.\\ \\ e-mail: claudiamaria.chanu@unito.it \\ luca.degiovanni@gmail.com \\ giovanni.rastelli@unito.it }
	
	\maketitle

	\begin{abstract} 
		It is natural to investigate if the  quantization of an integrable or superintegrable classical Hamiltonian systems is still integrable or superintegrable. We study here this problem in the case of  natural Hamiltonians with constants of motion quadratic in the momenta. The procedure of quantization here considered, transforms the Hamiltonian  into the  Laplace-Beltrami operator plus a scalar potential. In order to transform the constants of motion into symmetry operators of the quantum Hamiltonian, additional scalar potentials, known as quantum corrections, must be introduced, depending on the Riemannian structure of the manifold. We give here a complete geometric characterization of the quantum corrections necessary for the case considered. St\"ackel systems are studied in particular details. Examples in conformally and non-conformally flat manifolds are given.
	\end{abstract}

\section{Introduction}
When a classical natural Hamiltonian $H$
defined on the cotangent bundle of a Riemannian
manifold $M$  admits a quadratic in the momenta first integral $K$, 
the construction of pairwise commuting second order
differential operators associated with $H$ and $K$ 
is a non-trivial problem. 
One of the possible quantization rules is the
so-called Laplace-Beltrami quantization (or minimal quantization,  or Carter quantization):
if we consider the cotangent bundle $T^*M$ of a $N$-dimensional Riemannian manifold $(M, \mathbf g)$, then, 
the Laplace-Beltrami (LB) quantization associates with each quadratic in the momenta function (without  terms linear in the momenta)  $$K=\frac 12 K^{ab}p_ap_b+W$$  
defined by  the symmetric two-tensor $\mathbf K$ and the scalar $W$, the second-order differential operator on functions on $M$ 
\begin{equation*}
\widehat{K}=-\frac {\hbar^2}2 \Delta_K +W=-\frac {\hbar^2}2 \nabla_a K^{ab} \nabla_b +W,
\label{qschr2}
\end{equation*}
where $\nabla_a$ is the covariant derivative w.r.t. the Levi-Civita connection of the metric $\mathbf g$.
In particular, the operator associated with a geodesic Hamiltonian is (up to the constant factor $-\hbar^2/2$) the  Laplace-Beltrami operator $\Delta$. 

When the configuration manifold $M$ is conformally flat, other ways of matching quadratic first integrals of natural Hamiltonians (including $H$ itself) with differential operators on a Hilbert space have been employed, in order to obtain integrability or superintegrability of the quantum versions of specific Hamiltonian systems, see for example the articles \cite{D3,He}. 
Essentially, in both papers the quantization procedure requires a ``quantum   correction'' of the Laplacian: an additional scalar term proportional to the scalar curvature $\Sc$ appears in the  quantum Hamiltonian and in its symmetry operators.
The quantum corrections are necessary in order to allow the commutation with the Hamiltonian operator of one or more of the quantized constants of motion.

Namely, in \cite{He} (beside other two coordinate dependent techniques of quantization) it is shown that the maximal superintegrability of a classical $N$-dimensional Hamiltonian system is kept after quantization only if to the natural Hamiltonian $$H=\frac 12 g^{ab}p_ap_b+V$$ is associated the symmetry operator $$\widehat H_c=-\frac {\hbar^2}2 \left(\Delta-\frac{N-2}{4(N-1)}\Sc\right)+V,$$ where $\Delta$ is the Laplace-Beltrami operator and  $\Sc$ is the scalar curvature of the metric $\mathbf g$.  
The operator $\Delta_c$
\begin{equation}\label{dc}
\Delta_c=\Delta-\frac{N-2}{4(N-1)}\Sc,
\end{equation}
 is the well known conformally invariant Laplacian (also known as Yamabe operator).
A specific study on symmetries and conformal symmetries of the conformally invariant Laplacian is given in 
\cite{Rad}, where the quantization procedure considered is the so-called ``conformally - equivariant quantization" developed in \cite{D1},\cite{D2}. In this type of quantization, equivariance  w.r.t. the action of the conformal symmetry group is required and the classical configuration manifolds are assumed to be conformally flat.
In  \cite{Rad}, the hypothesis of conformal flatness is dropped and a 
geometrical compatibility condition is given, relating the (conformal) Killing tensor  $\mathbf K$, the Cotton-York tensor $\mathbf{A}$ and the Weyl tensor $\mathbf{C}$.
If the condition is satisfied, then a constructive method allows to write a (conformal) symmetry of the conformally invariant Laplacian (\ref{dc}).
Moreover, all (conformal) Killing tensors defined on a conformally flat manifold satisfy this condition and this explains the quantum corrections proposed in
\cite{He}.

%
%

In the present paper, instead, we look for a pair of additional functions $E$ and $E_K$
that have to be added to the Laplacian and to
the  Laplace-Beltrami quantization  $\Delta_K$ of a quadratic in the momenta first integral $K$
in order to get two commuting differential operators.
These functions, called quantum corrections, are in principle both considered as unknown to be determined. The quantization procedure obtained in this way will be called modified Laplace-Beltrami (MLB) quantization.

A compatibility condition involving both $E$ and $E_K$ is established and a consequent integrability condition involving only $E$ is derived. If we assume that $E$ is the scalar curvature, then this integrability condition reduces
to the one proposed in  \cite{Rad}. In this case, the modified Hamiltonian quantum operator coincides with the conformally invariant Laplacian (\ref{dc}).
By leaving $E$ arbitrary, instead, we have the possibility to find suitable quantum corrections for non-conformally flat metrics also (see example \cite{Rad} and \cite{KKMncf} in four dimension). 
As instance, a more geometric interpretation is given for the quantum correction given in \cite{KKMncf} and we explicitly gives commuting operators up to the five dimensional case. 

Our analysis is carried on for any quadratic constants of motion of any natural Hamiltonian and we obtain necessary and sufficient conditions for the existence of quantum corrections for the  simultaneous quantization of all the quadratic constants of motion of the
classical system.

In Section 2 we introduce Laplace-Beltrami quantization and we obtain our  main results: Theorems \ref{T1}, \ref{propB1} and \ref{SepQC}.
 
Through Theorem \ref{T1}, we characterize the quantum corrections of the Laplace-Beltrami operator and of its second-order symmetry operator involving the Ricci tensor, the Killing two-tensor associated with the quadratic first integral, and the Levi-Civita connection. 

Theorem \ref{propB1} characterizes the quantum corrections necessary to the simultaneous modified quantization of several quadratic in the momenta first integrals of the same Hamiltonian.

In Theorem \ref{SepQC}, we restrict  the previous results to the  important case of St\"ackel systems, the $N$-dimensional Hamiltonian systems admitting $N-1$ quadratic in the momenta first integrals in involution, plus  the Hamiltonian itself,  associated with the existence of coordinate systems allowing the integration by separation of variables of the Hamilton-Jacobi equation. The Laplace-Beltrami quantization of these systems has been considered in \cite{BCRI, BCRII} and we recall the main results. We address here to the case when the LB operator has to be modified by some quantum correction, for example due to existence of additional quadratic constants of motion  not included into the involutive set, a situation  typical of superintegrable systems. We give  necessary  conditions for the quantum correction allows the simultaneous modified quantization of the whole St\"ackel system.

In Section 3, we apply the results of Section 2 to three examples. The first two, taken from \cite{Rad}, show that the initial freedom in the choice of the quantum correction allows the quantization of first integrals found to be not associated with symmetries of the conformally invariant Laplacian in \cite{Rad}. In the third example, inspired by \cite{KKMncf}, we apply our technique to a hierarchy of $N$-dimensional Hamiltonian systems, all maximally superintegrable with $2N-1$																																														  constants of motion all quadratic in the momenta, defined (for $N>4$) on non-conformally-flat  manifolds. We  discuss the geometric meaning of some of the  quantum corrections considered in \cite{KKMncf}.

\section{Modified Laplace-Beltrami quantizations}
In order to study the modified Laplace-Beltrami (MLB) quantizations, we generalize here some results obtained in \cite{BCRII} about Laplace-Beltrami quantization. 
In \cite{BCRII} Laplace-Beltrami quantization of quadratic functions on cotangent bundles of Riemannian and pseudo-Riemannian manifolds was considered. Theorem 2.2 states that, given the quadratic in the momenta functions 
$$H=\frac 12 g^{ab}p_ap_b+W,\quad   K=\frac 12 K^{ab}p_ap_b+W_K,$$
 and the operators 
 \begin{equation}\label{qw}
 \widehat H=-\frac {\hbar^2}2 \Delta+W, \quad \widehat K=-\frac {\hbar^2}2 \Delta_K +W_K,
 \end{equation}
 with 
$$
\Delta_K=\nabla_a K^{ab} \nabla_b,
$$ 
then $[\widehat H,\widehat K]=0$ (i.e., $\widehat H$ and $\widehat K$ commute) if and only if  the 2-tensor $\mathbf{K}$ of  contravariant components $(K^{ab})$  is a Killing tensor for the metric $\mathbf g$  and  
\begin{equation}\label{e0}
\mathbf K\nabla W-\nabla W_K+\frac {\hbar ^2}6\delta \mathbf C_K=0,
\end{equation}
where $$\mathbf C_K=\mathbf K\mathbf R-\mathbf R\mathbf K,$$ $\mathbf R$ is the Ricci tensor and $(\delta C_K)^a=\nabla_bC_K^{ba}$ (i.e. $\delta$ is the divergence operator). It is well known that $\{H,K\}=0$ ($H$ and $K$ Poisson-commute) if and only if $\mathbf K$ is a Killing tensor and $\mathbf K\nabla W-\nabla W_K=0$, therefore, Proposition 2.5 of \cite{BCRII} states that, for any  quadratic first integral $K$ of $H$,
$$
[\widehat H,\widehat K]=0 \Leftrightarrow \delta \mathbf C_K=0.
$$ 

\begin{defi} We say that $\mathbf K$ satisfies the {\it Carter condition} if $\delta \mathbf C_K=0$. 
	\end{defi}

\begin{rmk}
\rm
The Laplace-Beltrami quantization (\ref{qw}) provides  (formally) self-adjoint differential operators acting on functions (see \cite{BCRII}). The modification of the Laplacian by an additional scalar term does not affect self-adjointness.
\end{rmk}

\begin{rmk}
\rm
If we are dealing with a constant-curvature configuration manifold or, more generally, with any Einstein manifold, where the Ricci tensor is a multiple of the metric, then the Laplace-Beltrami quantization produces symmetry operators for the quantum Hamiltonian operator for each quadratic first integral of the classical Hamiltonian.

\end{rmk}

\begin{rmk}\label{rd}
\rm
 If  the Killing tensor $\mathbf K$ and the Ricci tensor $\mathbf R$ are simultaneously diagonalized in in some coordinate system, then the Carter condition is automatically satisfied, because $\mathbf{C}_K=0$.
 \end{rmk}

We adapt the previous results to the case of the modified quantization. We have

\begin{teo}\label{T1}
Let $\Delta_E$ be the modified Laplacian $\Delta_E=\Delta+E$, with $E$ any given scalar, let be $\widehat H_E=-\frac {\hbar^2}2 \Delta_E+V=-\frac {\hbar^2}2 (\Delta +E)+V$ and $\widehat K_{E_K}=-\frac {\hbar^2}2 (\Delta_K +E_K)+V_K$, where $E_K$ is a scalar to be determined, then
$$
[\widehat H_E,\widehat K_{E_K}]=0 
$$ 
if and only if 
\begin{enumerate}
\item the 2-tensor $\mathbf{K}$ of  contravariant component $(K^{ab})$  is a Killing tensor,

\item the following equation holds
\begin{equation}\label{e2}
\frac {\hbar ^2}6 \delta \mathbf C_K+ \frac {\hbar^2}2 \left(\nabla E_K-\mathbf K\nabla E\right)+\mathbf K\nabla V-\nabla V_K=0.
\end{equation}
\end{enumerate}
\end{teo}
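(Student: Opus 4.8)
The plan is to reduce the computation of $[\widehat H_E,\widehat K_{E_K}]$ to the already-known commutator from \cite{BCRII} by exploiting linearity of the bracket in its scalar-potential arguments. Write $\widehat H_E = \widehat H - \tfrac{\hbar^2}{2}E$ and $\widehat K_{E_K} = \widehat K - \tfrac{\hbar^2}{2}E_K$, where $\widehat H = -\tfrac{\hbar^2}{2}\Delta + V$ and $\widehat K = -\tfrac{\hbar^2}{2}\Delta_K + V_K$ are exactly the Laplace-Beltrami operators of Theorem~2.2 of \cite{BCRII} (with $W=V$, $W_K=V_K$). Since $E$ and $E_K$ act as multiplication operators, expand
\begin{equation*}
[\widehat H_E,\widehat K_{E_K}] = [\widehat H,\widehat K] - \frac{\hbar^2}{2}[\widehat H, E_K] - \frac{\hbar^2}{2}[E,\widehat K] + \frac{\hbar^4}{4}[E,E_K].
\end{equation*}
The last term vanishes because multiplication operators commute. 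The two mixed terms are first-order differential operators: $[E,\widehat K] = [E,-\tfrac{\hbar^2}{2}\Delta_K] = -\tfrac{\hbar^2}{2}[E,\nabla_aK^{ab}\nabla_b]$, and similarly for $[\widehat H,E_K]$; I would compute these using the Leibniz rule for $\nabla$ and the fact that $\Delta_K f = \nabla_a(K^{ab}\nabla_b f)$.

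First I would record the elementary identity, for a scalar $S$ and the operator $\Delta_L = \nabla_a L^{ab}\nabla_b$ (with $\mathbf L$ symmetric), that acting on a test function $f$,
\begin{equation*}
[\Delta_L, S]f = \nabla_a\!\left(L^{ab}\nabla_b(Sf)\right) - S\,\nabla_a\!\left(L^{ab}\nabla_b f\right) = 2L^{ab}(\nabla_aS)(\nabla_b f) + f\,\nabla_a\!\left(L^{ab}\nabla_b S\right),
\end{equation*}
i.e. $[\Delta_L,S] = 2\,\mathbf L\nabla S\cdot\nabla + (\Delta_L S)$, a first-order operator whose symbol is $2\mathbf L\nabla S$. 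Applying this with $\mathbf L=\mathbf g$, $S=E_K$ gives the symbol $2\nabla E_K$ (times scalar factors), and with $\mathbf L=\mathbf K$, $S=E$ gives the symbol $2\mathbf K\nabla E$. Meanwhile, by Proposition~2.5 / Theorem~2.2 of \cite{BCRII}, $[\widehat H,\widehat K]$ is the first-order operator whose vanishing is governed by the Killing condition on $\mathbf K$ together with equation~(\ref{e0}); concretely $[\widehat H,\widehat K]$ has a second-order part that vanishes iff $\mathbf K$ is Killing, and then reduces to the first-order operator with symbol proportional to $\mathbf K\nabla V - \nabla V_K + \tfrac{\hbar^2}{6}\delta\mathbf C_K$ plus its associated zeroth-order ``divergence'' term.

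The key structural point — and the place where one must be slightly careful — is that the whole commutator $[\widehat H_E,\widehat K_{E_K}]$ is a differential operator of order at most two, whose second-order part is unaffected by the scalar corrections (since $[\widehat H,E_K]$ and $[E,\widehat K]$ are only first order), so vanishing of the second-order part is equivalent to $\mathbf K$ being a Killing tensor, giving condition~(i). Granting (i), the remaining operator is first-order of the form $X^a\nabla_a + (\text{scalar})$, and one invokes the standard fact (used already in \cite{BCRII}) that such an operator vanishes identically if and only if its vector symbol $X^a$ vanishes — the zeroth-order term is then automatically zero because it is, up to a universal constant, the divergence of that symbol. Collecting the three symbol contributions computed above, $X = -\tfrac{\hbar^2}{2}\cdot 2\nabla E_K$ from $[\widehat H,E_K]$, $+\tfrac{\hbar^2}{2}\cdot 2\,\mathbf K\nabla E$ from $[E,\widehat K]$, and the $\mathbf K\nabla V-\nabla V_K+\tfrac{\hbar^2}{6}\delta\mathbf C_K$ piece from $[\widehat H,\widehat K]$, the condition $X=0$ is precisely equation~(\ref{e2}) after multiplying through by $-1$. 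The main obstacle is purely bookkeeping: keeping consistent signs and $\hbar$-factors across the three commutators and confirming that the zeroth-order remainders do indeed assemble into the divergence of the first-order symbol (so that no extra scalar equation survives beyond~(\ref{e2})) — this is exactly the mechanism already verified for the unmodified case in \cite{BCRII}, and the scalar corrections contribute nothing new to it since $\nabla_a(g^{ab}\nabla_b E_K) = \Delta E_K$ and $\nabla_a(K^{ab}\nabla_b E)$ are precisely the divergences of $2\nabla E_K$ and $2\mathbf K\nabla E$ up to the universal factor.
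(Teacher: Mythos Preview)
Your argument is correct but takes a more hands-on route than the paper. The paper simply observes that $\widehat H_E$ and $\widehat K_{E_K}$ are \emph{exactly} the Laplace--Beltrami quantizations (\ref{qw}) of the functions with potentials $W=V-\tfrac{\hbar^2}{2}E$ and $W_K=V_K-\tfrac{\hbar^2}{2}E_K$, so Theorem~2.2 of \cite{BCRII} applies verbatim; substituting these shifted potentials into (\ref{e0}) yields (\ref{e2}) in a single line, with the Killing condition and the ``zeroth-order part $=$ divergence of symbol'' mechanism already packaged inside the cited result. You instead expand $[\widehat H_E,\widehat K_{E_K}]$ by bilinearity into four pieces, compute the symbol of each cross-commutator $[\Delta_L,S]$ explicitly, and then reassemble, separately checking that the extra scalar pieces preserve the divergence structure. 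Both approaches are sound and ultimately rely on the same input from \cite{BCRII}; the paper's is more economical (it never reopens the commutator), while yours makes transparent exactly where each term of (\ref{e2}) originates and would generalize more readily if the modification were, say, a first-order operator rather than a multiplication operator.
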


\proof The modified operators $\widehat H_E$ and $\widehat H_{K_E}$ correspond to the  Laplace-Beltrami quantization (\ref{qw}) of the Hamiltonians $H_E=\frac 12 g^{ab}p_ap_b+W$, $K_{E_K}=\frac 12 K^{ab}p_ap_b+W_K$, where
\begin{eqnarray*}
W&=&V-\frac {\hbar^2}2 E,\\
W_K&=&V_K-\frac {\hbar^2}2 E_K,
\end{eqnarray*}
and we may apply Theorem 2.2 of \cite{BCRII}.
By substituting these relations in (\ref{e0}) we obtain
$$
\mathbf K(\nabla V-\frac {\hbar^2}2\nabla E)-\nabla V_K+\frac {\hbar ^2}2\nabla E_K+\frac {\hbar^2}6\delta \mathbf C_K=0,
$$
that gives equation (\ref{e2}).\qed

Since  $H_K$ is a first integral of $H$ if and only if $\mathbf K$ is a Killing tensor and
$$
\mathbf K\nabla V-\nabla V_K=0,
$$
it follows,
\begin{prop}\label{P1}
If $\{H,K\}=0,$ then $[\widehat H_E,\widehat K_{E_K}]=0$ if and only if
\begin{equation}\label{se}
\nabla E_K-\mathbf K\nabla E+ \frac 13 \delta \mathbf C_K=0.
\end{equation}
\end{prop}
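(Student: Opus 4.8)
The plan is to derive Proposition \ref{P1} as a direct specialization of Theorem \ref{T1}, using the classical first-integral condition to simplify equation (\ref{e2}). First I would recall that, by the stated characterization of Poisson-commutation, the hypothesis $\{H,K\}=0$ is equivalent to the conjunction of two facts: $\mathbf K$ is a Killing tensor for $\mathbf g$, and the potentials satisfy $\mathbf K\nabla V-\nabla V_K=0$. The first of these is exactly condition (i) of Theorem \ref{T1}, so under the hypothesis of the proposition that condition is automatically fulfilled and need not be carried along.

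Next I would invoke Theorem \ref{T1} directly: since (i) holds, $[\widehat H_E,\widehat K_{E_K}]=0$ is equivalent to equation (\ref{e2}) alone, namely
\[
\frac {\hbar ^2}6 \delta \mathbf C_K+ \frac {\hbar^2}2 \left(\nabla E_K-\mathbf K\nabla E\right)+\mathbf K\nabla V-\nabla V_K=0.
\]
Substituting the classical relation $\mathbf K\nabla V-\nabla V_K=0$ makes the last two terms drop out, leaving
\[
\frac {\hbar ^2}6 \delta \mathbf C_K+ \frac {\hbar^2}2 \left(\nabla E_K-\mathbf K\nabla E\right)=0.
\]
Dividing through by the nonzero factor $\hbar^2/2$ and rearranging yields precisely (\ref{se}):
\[
\nabla E_K-\mathbf K\nabla E+ \frac 13 \delta \mathbf C_K=0.
\]
Conversely, if (\ref{se}) holds, multiplying back by $\hbar^2/2$ and re-adding the vanishing term $\mathbf K\nabla V-\nabla V_K$ recovers (\ref{e2}), so by Theorem \ref{T1} the operators commute. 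This establishes the equivalence in both directions.

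There is essentially no obstacle here: the proof is a short algebraic reduction, and the only point requiring a word of care is the bookkeeping of the constant $\hbar^2$ and the coefficients $1/6$ versus $1/2$ and $1/3$ when dividing equation (\ref{e2}) by $\hbar^2/2$. One should also note explicitly that the hypothesis $\{H,K\}=0$ supplies condition (i) of Theorem \ref{T1} for free, so that the sole surviving requirement is the scalar equation (\ref{se}) relating the two quantum corrections $E$ and $E_K$ through the Killing tensor $\mathbf K$ and the divergence of the commutator tensor $\mathbf C_K$.
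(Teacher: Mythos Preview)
Your proposal is correct and follows essentially the same route as the paper: the authors simply observe that $\{H,K\}=0$ is equivalent to $\mathbf K$ being Killing together with $\mathbf K\nabla V-\nabla V_K=0$, and then state that Proposition~\ref{P1} follows from Theorem~\ref{T1}. You have merely made the one-line reduction of (\ref{e2}) to (\ref{se}) explicit, including the division by $\hbar^2/2$.
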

From Proposition 2,  integrability conditions for $E_K$ can be immediately obtained 
\begin{prop}\label{P1bis}
The integrability conditions of $E_K$ coincide with the symmetry of the 2-tensor  $\mathbf P=\nabla (\mathbf K\nabla E-\tfrac{1}{3}\delta \mathbf C_K)$
of covariant components
\begin{equation}\label{icS}
P_{ab}=\nabla_a [(K\nabla E)_b-\tfrac 13 (\delta C_K)_b].
\end{equation}
Equivalently, 
\begin{equation*}d(\mathbf K\nabla E-\frac 13 \delta \mathbf C_K)^\flat=0.
\end{equation*}
\end{prop}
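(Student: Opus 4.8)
The plan is to read the condition (\ref{se}) of Proposition \ref{P1} as a gradient equation for the single unknown scalar $E_K$, namely $\nabla_bE_K=\omega_b$, where $\omega_b=(K\nabla E)_b-\tfrac13(\delta C_K)_b$ are the covariant components of the vector field $\mathbf K\nabla E-\tfrac13\delta\mathbf C_K$, and to use the elementary fact that such a system is (locally) solvable exactly when the $1$-form $\omega^\flat$ with those components is closed. Thus, by Proposition \ref{P1}, the operators $\widehat H_E$ and $\widehat K_{E_K}$ commute for a suitable choice of $E_K$ if and only if $\omega^\flat=dE_K$ for some scalar $E_K$.

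For the necessity of the symmetry of $\mathbf P$ I would differentiate: if such an $E_K$ exists, then $P_{ab}=\nabla_a\omega_b=\nabla_a\nabla_bE_K$, and since the Levi-Civita connection is torsion-free the Hessian of a scalar is symmetric, whence $P_{ab}=P_{ba}$. For sufficiency I would invoke the Poincar\'e lemma, observing that
\[
(d\omega^\flat)_{ab}=\nabla_a\omega_b-\nabla_b\omega_a=P_{ab}-P_{ba},
\]
the Christoffel contributions cancelling by symmetry of the connection; hence $\mathbf P$ symmetric $\iff d\omega^\flat=0\iff$ locally $\omega^\flat=dE_K$. This simultaneously yields the displayed reformulation $d(\mathbf K\nabla E-\tfrac13\delta\mathbf C_K)^\flat=0$, and, combined with Proposition \ref{P1}, proves the statement.

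I do not expect a genuine obstacle here: the content is the standard observation that the overdetermined first-order system $\nabla E_K=\mathbf K\nabla E-\tfrac13\delta\mathbf C_K$ for a single scalar has precisely one integrability condition, the closedness of $\omega^\flat$. The only two points worth a line of care are that the antisymmetrisation of $\nabla_a\omega_b$ coincides with $(d\omega^\flat)_{ab}$ (so that no spurious curvature term intervenes, as would occur for a general second covariant derivative of a $1$-form that is not itself a differential), and that ``integrability condition'' is meant in the local sense — on a simply connected domain it is also sufficient, which is all that the examples of Section 3 need.
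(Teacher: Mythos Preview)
Your proposal is correct and is precisely the argument the paper has in mind: the authors state that the integrability conditions ``can be immediately obtained'' from Proposition~\ref{P1}, and your write-up just makes explicit the standard fact that the gradient equation $\nabla E_K=\mathbf K\nabla E-\tfrac13\delta\mathbf C_K$ has a local solution iff the right-hand side, viewed as a $1$-form, is closed, i.e.\ iff $P_{ab}$ is symmetric. Nothing more is needed.
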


\begin{prop}
	Let $(E,E_K)$ be a pair of functions satisfying condition (\ref{se}). The most general solutions $(E_g,\, E_{Kg})$ of (\ref{se}) are of the form
	\begin{equation*}
	E_g=E+E_o, \qquad E_{Kg}= E_K+ E_{Ko},
	\end{equation*}
	with $(E_o,E_{Ko})$ belonging to the linear
	space defined by the solutions of  the linear homogeneous differential equation 
	\begin{equation}\label{seo}
	\mathbf{K}\nabla E_o =\nabla E_{Ko},
	\end{equation}
	whose integrability condition is $d(\mathbf{K}\nabla E_o)^\flat=0$.
\end{prop}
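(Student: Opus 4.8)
The plan is to exploit the fact that equation~(\ref{se}), namely $\nabla E_K-\mathbf K\nabla E+\tfrac13\delta\mathbf C_K=0$, is an \emph{inhomogeneous linear} equation in the pair of unknowns $(E,E_K)$: the term $\delta\mathbf C_K$ depends only on the Killing tensor $\mathbf K$ and on the Riemannian structure (recall $\mathbf C_K=\mathbf K\mathbf R-\mathbf R\mathbf K$), not on $E$ or $E_K$, whereas the remaining terms $\nabla E_K$ and $\mathbf K\nabla E$ are linear in the unknowns. Hence, provided the solution set is non-empty, it is an affine subspace of the space of pairs of scalars: the translate, by any particular solution, of the linear space of solutions of the associated homogeneous equation.

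Concretely, first I would take $(E,E_K)$ as a fixed solution of~(\ref{se}) and let $(E_g,E_{Kg})$ be an arbitrary solution. Subtracting the two instances of~(\ref{se}), the $\delta\mathbf C_K$ terms cancel, and writing $E_o=E_g-E$, $E_{Ko}=E_{Kg}-E_K$ one is left with $\nabla E_{Ko}-\mathbf K\nabla E_o=0$, which is exactly~(\ref{seo}). Conversely, if $(E_o,E_{Ko})$ satisfies~(\ref{seo}), then adding~(\ref{seo}) to~(\ref{se}) shows that $(E+E_o,\,E_K+E_{Ko})$ still satisfies~(\ref{se}). This gives the claimed parametrization $E_g=E+E_o$, $E_{Kg}=E_K+E_{Ko}$; and since~(\ref{seo}) is linear and homogeneous in $(E_o,E_{Ko})$, its solution set is a linear space, as stated.

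For the integrability condition I would argue as follows: equation~(\ref{seo}) asserts that the vector field $\mathbf K\nabla E_o$ is the gradient of the scalar $E_{Ko}$; equivalently, the $1$-form $(\mathbf K\nabla E_o)^\flat$ obtained by lowering the index with $\mathbf g$ is exact. By the Poincar\'e lemma this is possible (locally, and globally when $M$ is simply connected) if and only if that $1$-form is closed, i.e.\ $d(\mathbf K\nabla E_o)^\flat=0$, in which case $E_{Ko}$ is recovered up to an additive constant by integration. Observe that this condition involves $E_o$ alone, so that the homogeneous space is effectively parametrized by those scalars $E_o$ for which $(\mathbf K\nabla E_o)^\flat$ is closed.

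I do not expect any real obstacle here: the argument is a routine application of linearity together with the Poincar\'e lemma. The only points deserving care are the remark that $\delta\mathbf C_K$ is independent of $E$ and $E_K$ (which is immediate from its definition), and the distinction between the local and global validity of the integrability condition $d(\mathbf K\nabla E_o)^\flat=0$.
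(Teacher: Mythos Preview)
Your proof is correct and follows essentially the same approach as the paper, which simply observes that (\ref{se}) is a linear non-homogeneous differential condition in $(E,E_K)$ with associated homogeneous equation (\ref{seo}). Your write-up is in fact more detailed than the paper's, spelling out both directions of the affine-space argument and the Poincar\'e-lemma derivation of the integrability condition.
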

\begin{proof}
	Equation (\ref{se}) is a linear non-homogeneous
	differential condition in $E$ and $E_K$, whose associated homogeneous equation is (\ref{seo}).
\end{proof}

Condition (\ref{seo}) is equivalent to $\{H+E_o,K+E_{Ko}\}=0$.



When a Hamiltonian $H$ admits several quadratic first integrals and  we know the quantum corrections allowing the MLB quantization of $H$ together with each one of the first integrals, we may ask ourselves if there exists a single quantum correction of $H$ allowing the simultaneous MLB quantization of all those first integrals. We have

\begin{teo}\label{propB1}
	Let $H$ be a Hamiltonian with $k$ quadratic in the momenta first integrals $K^{(i)}$.
	Let us assume that for each $i$ there exist quantum corrections $E^{(i)}$, $F^{(i)}$ such that $$[\widehat{H}_{E^{(i)}},\widehat{K}^{(i)}_{F^{(i)}}]=0.
	$$
	There exists  a simultaneous quantum correction $E$ of $H$ for all the    $K^{(i)}$ if and only if
	$$
	d(\mathbf K^{(i)}\nabla E^{(i)})=d(\mathbf K^{(i)}\nabla E), \quad i=1,\ldots,k.
	$$
	The quantum corrections $F^{(i)}$ must be replaced by $W^{(i)}=F^{(i)}+ \bar{F}^{(i)}$ where
	each $\bar{F}^{(i)}$ is a potential of the closed one-form
	$\mathbf K^{(i)}\nabla (E-E^{(i)})$:
	$$[\widehat{H}_{E},\widehat{K}^{(i)}_{W^{(i)}}]=0.$$
\end{teo}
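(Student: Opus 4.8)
The plan is to translate everything into the first-order conditions of Proposition~\ref{P1}. Each $K^{(i)}$ is a first integral of $H$, so $\{H,K^{(i)}\}=0$ and $\mathbf K^{(i)}$ is a Killing tensor; hence Proposition~\ref{P1} applies to every pair built from $K^{(i)}$. In particular the hypothesis $[\widehat{H}_{E^{(i)}},\widehat{K}^{(i)}_{F^{(i)}}]=0$ is equivalent to $\nabla F^{(i)}-\mathbf K^{(i)}\nabla E^{(i)}+\tfrac13\,\delta\mathbf C_{K^{(i)}}=0$, and, for a candidate common correction $E$ of $H$, the desired $[\widehat{H}_{E},\widehat{K}^{(i)}_{W^{(i)}}]=0$ is equivalent to $\nabla W^{(i)}-\mathbf K^{(i)}\nabla E+\tfrac13\,\delta\mathbf C_{K^{(i)}}=0$, for every $i$. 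So the statement reduces to asking when this second system admits, for one and the same $E$, solutions $W^{(i)}$.

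First I would read the second family of equations, for fixed $E$, as the problem of finding a potential $W^{(i)}$ of the one-form $(\mathbf K^{(i)}\nabla E-\tfrac13\delta\mathbf C_{K^{(i)}})^\flat$. By the Poincar\'e lemma this is solvable (locally, or globally if $M$ is simply connected) exactly when that one-form is closed, i.e. when $d(\mathbf K^{(i)}\nabla E-\tfrac13\delta\mathbf C_{K^{(i)}})^\flat=0$, which is precisely the integrability condition of Proposition~\ref{P1bis}. But the hypothesis already gives $\mathbf K^{(i)}\nabla E^{(i)}-\tfrac13\delta\mathbf C_{K^{(i)}}=\nabla F^{(i)}$, a closed one-form (indeed exact). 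Subtracting the two closedness statements, the $\delta\mathbf C_{K^{(i)}}$ contributions cancel and the solvability condition for $W^{(i)}$ collapses to $d(\mathbf K^{(i)}\nabla E)^\flat=d(\mathbf K^{(i)}\nabla E^{(i)})^\flat$ for $i=1,\dots,k$, the asserted criterion. This gives both directions of the ``if and only if'': if the criterion holds, $\mathbf K^{(i)}\nabla(E-E^{(i)})$ is closed hence locally exact; if conversely a single $E$ with suitable $W^{(i)}$ exists, Proposition~\ref{P1} makes $\mathbf K^{(i)}\nabla E-\tfrac13\delta\mathbf C_{K^{(i)}}=\nabla W^{(i)}$ exact, and subtracting the hypothesis recovers the criterion.

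It remains to pin down $W^{(i)}$. When the criterion holds, let $\bar F^{(i)}$ be a potential of the closed one-form $(\mathbf K^{(i)}\nabla(E-E^{(i)}))^\flat$ and put $W^{(i)}=F^{(i)}+\bar F^{(i)}$. Substituting into the target equation $\nabla W^{(i)}-\mathbf K^{(i)}\nabla E+\tfrac13\delta\mathbf C_{K^{(i)}}=0$ and using the equation satisfied by $(E^{(i)},F^{(i)})$, the $\nabla F^{(i)}$, $\mathbf K^{(i)}\nabla E^{(i)}$ and $\delta\mathbf C_{K^{(i)}}$ terms cancel and the identity holds; by Proposition~\ref{P1} this is exactly $[\widehat{H}_E,\widehat{K}^{(i)}_{W^{(i)}}]=0$. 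Conversely, subtracting the two first-order equations forces $\nabla(W^{(i)}-F^{(i)})=\mathbf K^{(i)}\nabla(E-E^{(i)})$, so every admissible replacement has this form. The whole argument is bookkeeping with the identities of Theorem~\ref{T1} and Proposition~\ref{P1}; the only point deserving care is the step from ``a potential exists'' to ``the one-form is closed'' — that is, whether one works locally or assumes $M$ simply connected — together with the standing convention that $\mathbf K^{(i)}\nabla E$ denotes the vector field $K^{(i)\,ab}\nabla_b E$ whose metric-dual one-form is the object being differentiated. I do not expect a genuine obstacle beyond this.
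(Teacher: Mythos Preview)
Your argument is correct and follows essentially the same route as the paper: both proofs invoke Proposition~\ref{P1} to rewrite the hypothesis and the target as the equations $\tfrac13\delta\mathbf C_{K^{(i)}}=\mathbf K^{(i)}\nabla E^{(i)}-\nabla F^{(i)}$ and $\tfrac13\delta\mathbf C_{K^{(i)}}=\mathbf K^{(i)}\nabla E-\nabla W^{(i)}$, subtract them to eliminate $\delta\mathbf C_{K^{(i)}}$, and read off the closedness condition $d(\mathbf K^{(i)}\nabla E^{(i)})=d(\mathbf K^{(i)}\nabla E)$ as the integrability condition for $W^{(i)}-F^{(i)}$. Your version is in fact a bit more thorough than the paper's, since you spell out both directions of the equivalence, verify explicitly that $W^{(i)}=F^{(i)}+\bar F^{(i)}$ satisfies the required equation, and flag the local-versus-global caveat in passing from closed to exact.
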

\begin{proof} From Proposition \ref{P1}, a simultaneous quantum correction $E$ exists if and only if it satisfies
	$$
	\frac 13 \delta \mathbf C_{K^{(i)}}=\mathbf K^{(i)}\nabla E-\nabla W^{(i)}, \quad i=1,\ldots,k,
	$$
	for suitable functions $W^{(i)}$. From the same Proposition we have
	$$
	\frac 13 \delta \mathbf C_{K^{(i)}}=\mathbf K^{(i)}\nabla E^{(i)}-\nabla F^{(i)}, \quad i=1,\ldots,k,
	$$
	then, the first equation becomes
	
	\begin{equation}\label{P1s}
	\mathbf K^{(i)}\nabla (E^{(i)}- E)=	\nabla (F^{(i)}- W^{(i)}).
	\end{equation}
	The integrability condition of this equation is
	\begin{equation}
	d(\mathbf K^{(i)}\nabla (E^{(i)}- E))=d(\mathbf K^{(i)}\nabla E^{(i)})-d(\mathbf K^{(i)}\nabla E)=0.
	\end{equation}

\end{proof}

We remark that the previous statement  does not consider the possible commutation relations among the first integrals, required by Liouville and quantum integrability of the Hamiltonian.

The St\"ackel systems are natural  $N$-dimensional Hamiltonian systems with $N$ quadratic independent  first integrals in involution,  such that the Killing tensors  associated with these first integrals are simultaneously diagonalized in orthogonal coordinates, coordinates determined by the eigenvectors of the Killing tensors. These coordinates are called St\"ackel coordinates. It follows \cite{BCRI} that St\"ackel systems are Liouville integrable, one of the first integrals can be chosen as the Hamiltonian $H$ itself. The characteristic property of St\"ackel systems is that the Hamilton-Jacobi equation of $H$ is additively separable in these orthogonal coordinates. We call Killing-St\"ackel algebra the linear space generated by the Killing tensors of a St\"ackel system.

It is known (\cite{BCRI, BCRII} and references therein) that the Schr\"odinger equation is multiplicatively separable in St\"ackel coordinates if and only if the coordinates diagonalize the Ricci tensor also. In this case, the condition $C_K=0$ is verified for all the elements of the Killing-St\"ackel algebra.

Moreover, St\"ackel systems are LB quantizable provided the Carter condition holds for the Killing-St\"ackel algebra. We recall that, for  a St\"ackel system \cite{BCRI}
\begin{equation}
\delta \mathbf C_{K_i}=0, \;i=1, \ldots, N, \Leftrightarrow [\widehat K_i,\widehat K_j]=0, \; i,j =1,\ldots, N.
\end{equation}   
Consequently, in this case no quantum correction is necessary. However, if the St\"ackel system is superintegrable, quantum corrections can be necessary to quantize the additional first integrals. 

If the Carter condition holds for all the quadratic Killing 2-tensors $(\mathbf K_i)$, $i=1,\ldots,N$, of a Killing-St\"ackel algebra, we call it {\it pre-Robertson condition}. If, in addition, $\mathbf C_{K_i}=0$, $i=1,\ldots,N$, we say that $(\mathbf K_i)$ satisfy the {\it Robertson condition} (see \cite{Car, BCRI} and references therein).

 Since the $\mathbf K_i$ are simultaneously diagonalized, the Robertson condition is equivalent to the diagonalization of the Ricci tensor in St\"ackel coordinates.  As shown in  \cite{BCRII}, the satisfaction of the pre-Robertson condition allows the LB quantization of the Killing-St\"ackel algebra in any manifold.

The LB quantization of St\"ackel systems, discussed in \cite{BCRI, BCRII}, provides a Schr\"odinger operator that admits orthogonal (multiplicative) separation of variables in the same orthogonal coordinates allowing the (additive) separation of variables for the Hamilton-Jacobi equation of the original St\"ackel system, only if the Robertson condition is satisfied.

In \cite{Bl},  an  alternative quantization procedure of St\"ackel systems is considered and applied in \cite{Bl1}  to a class of superintegrable St\"ackel systems with all quadratic in the momenta constants of motion.  The procedure gives separable, and superintegrable, Schr\"odinger operators even when the Robertson condition is not verified. This result is obtained by replacing the metric tensor $\mathbf g$ defining the Laplace-Beltrami operator with a suitable   conformal deformation $\bar{\mathbf g}$. An analysis of the similarities and differences  between these two approaches is worthwhile, but beyond the scope of the present paper.

From Theorem \ref{T1} and Theorem \ref{propB1} we derive the following result, useful for the quantization of superintegrable St\"ackel systems.

\begin{teo} \label{SepQC} Let $(H, K_1 \ldots, K_{N-1})$ be a St\"ackel system, of Hamilton function $H$ and separable coordinates $(q^i)$, satisfying the pre-Robertson condition.  Let  $\mathbf K_i$ be the Killing tensors such that $K_i=\frac 12 K_i^{lm}p_lp_m+V_i$. Let $\widehat H_E$ be the quantum Hamiltonian obtained from $H$ with any  quantum correction $E$. If 
$$
d(\mathbf K_idE)=0,\quad  i=1, \ldots, n-1,
$$
then, the system $( H-\frac{\hbar^2}{2}E, K_{1}-\frac{\hbar^2}{2}E_{1} \ldots, K_{N-1}-\frac{\hbar^2}{2}E_{N-1} )$, where $dE_i=\mathbf K_i dE$, is a St\"ackel system separable in $(q^i)$. Moreover,  by denoting by $\widehat K_{E_i}$ the modified operators with quantum corrections $E_{i}$,  the system $(\widehat H_E, \widehat K_{E_1}, \ldots, \widehat K_{E_{N-1}})$ is a quantum integrable system. Furthermore, if the stronger Robertson condition is satisfied by the St\"ackel system, then  the quantum system $(\widehat H_E, \widehat K_{E_1}, \ldots, \widehat K_{E_{N-1}})$ is also separable in $(q^i)$.
\end{teo}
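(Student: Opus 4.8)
The plan is to recognize that the modified quantization with the corrections $E$ and $E_i$ is nothing but the ordinary Laplace-Beltrami quantization of a St\"ackel system obtained from the original one by a deformation of the potentials only, leaving the metric and the Killing tensors untouched, so that the quantization and separation results recalled above from \cite{BCRI, BCRII} apply verbatim.

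First I would use the hypothesis $d(\mathbf K_i dE)=0$ to produce, locally, functions $E_i$ with $dE_i=\mathbf K_i dE$, i.e.\ $\nabla E_i=\mathbf K_i\nabla E$. Set $\tilde H=H-\tfrac{\hbar^2}{2}E$ and $\tilde K_i=K_i-\tfrac{\hbar^2}{2}E_i$. Their quadratic-in-momenta parts are those of $H$ and $K_i$, so the associated Killing tensors are unchanged; in particular the $\mathbf K_i$ are still simultaneously diagonalized in $(q^i)$ and, being diagonal in a common basis, commute pairwise as endomorphisms of $TM$. The relation $\nabla E_i=\mathbf K_i\nabla E$ is exactly equation (\ref{seo}) (with $E_o=-\tfrac{\hbar^2}{2}E$, $E_{Ko}=-\tfrac{\hbar^2}{2}E_i$), hence equivalent to $\{\tilde H,\tilde K_i\}=0$; and a short Poisson-bracket computation, using that $E$ and $E_i$ carry no momenta, reduces $\{\tilde K_i,\tilde K_j\}$ to a multiple of $\{K_i,E_j\}-\{K_j,E_i\}$, which vanishes because $\mathbf K_i\nabla E_j=\mathbf K_i\mathbf K_j\nabla E=\mathbf K_j\mathbf K_i\nabla E=\mathbf K_j\nabla E_i$. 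Hence $(\tilde H,\tilde K_1,\dots,\tilde K_{N-1})$ is a collection of $N$ functionally independent functions in involution whose Killing tensors are diagonalized in the orthogonal coordinates $(q^i)$, that is, a St\"ackel system separable in $(q^i)$ \cite{BCRI}, which is the first assertion.

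Next I would observe that $\widehat H_E$ and $\widehat K_{E_i}$ are precisely the Laplace-Beltrami quantizations of $\tilde H$ and $\tilde K_i$: indeed $\widehat H_E=-\tfrac{\hbar^2}{2}(\Delta+E)+V=-\tfrac{\hbar^2}{2}\Delta+(V-\tfrac{\hbar^2}{2}E)$, and similarly for $\widehat K_{E_i}$. Since the pre-Robertson condition involves only the metric and the tensors $\mathbf K_i$, which are unchanged, it still holds for $\tilde H$; by the recalled equivalence between the Carter condition on the whole Killing-St\"ackel algebra and the mutual commutation of the associated Laplace-Beltrami operators, $[\widehat H_E,\widehat K_{E_i}]=0$ and $[\widehat K_{E_i},\widehat K_{E_j}]=0$ for all $i,j$ (the first is also immediate from Proposition~\ref{P1}, as $\nabla E_i-\mathbf K_i\nabla E+\tfrac13\delta\mathbf C_{K_i}=0$). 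Together with the self-adjointness of these operators (a scalar modification of the Laplacian preserves it) and the independence of their principal symbols, this gives the quantum integrability of $(\widehat H_E,\widehat K_{E_1},\dots,\widehat K_{E_{N-1}})$. Finally, under the Robertson condition the Ricci tensor is diagonal in $(q^i)$ --- again a property of the metric alone, hence shared by $\tilde H$ --- so the result of \cite{BCRI, BCRII} that the Laplace-Beltrami quantization of a Robertson-St\"ackel system is multiplicatively separable in the St\"ackel coordinates, applied to $\tilde H$, yields the separation of $\widehat H_E$ and of each $\widehat K_{E_i}$ in $(q^i)$.

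I expect the only genuinely non-formal point to be the classical involution $\{\tilde K_i,\tilde K_j\}=0$ of the deformed integrals: this is exactly where the simultaneous diagonalizability of the $\mathbf K_i$ is essential, making them commute as operators, in combination with the defining relation $\nabla E_i=\mathbf K_i\nabla E$. Everything else is the observation that the passage from $(H,K_i)$ to $(\widehat H_E,\widehat K_{E_i})$ is the ordinary Laplace-Beltrami quantization of a St\"ackel system sharing the original Riemannian geometry, so that the pre-Robertson and Robertson conditions, and with them the relevant results of \cite{BCRI, BCRII}, transfer without change.
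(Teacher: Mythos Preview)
Your argument is correct and follows the same overall strategy as the paper: both recognize that $\widehat H_E$ and the $\widehat K_{E_i}$ are the ordinary Laplace--Beltrami quantizations of a classical St\"ackel system obtained from the original one by modifying only the scalar potentials, so that the pre-Robertson/Robertson conditions and the results of \cite{BCRI,BCRII} carry over unchanged.

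The one genuine difference lies in how the pairwise involution $\{\tilde K_i,\tilde K_j\}=0$ is obtained. You do it by a direct Poisson-bracket computation, reducing the bracket to $\mathbf K_i\nabla E_j-\mathbf K_j\nabla E_i$ and killing this with the commutativity $\mathbf K_i\mathbf K_j=\mathbf K_j\mathbf K_i$ of simultaneously diagonalized tensors. The paper instead observes that the conditions $d(\mathbf K_i dE)=0$ for all $i$ characterize $E$ as a \emph{St\"ackel multiplier}, i.e.\ $E=\sum_j\phi_j(q^j)g^{jj}$ in the separable coordinates, and then invokes the general fact from \cite{BCRI} that adding a St\"ackel multiplier to $V$ produces another St\"ackel system with the same Killing--St\"ackel algebra. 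Your route is more self-contained and avoids importing the St\"ackel-multiplier machinery; the paper's route, on the other hand, gives the explicit coordinate form of $E$ for free and ties the result to an established characterization, which is useful when one later wants to \emph{construct} admissible corrections $E$ rather than merely verify a given one.
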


\proof  Since the pre-Robertson condition is satisfied,  the  equation (\ref{se}) of Proposition \ref{P1} reads as
\begin{equation}\label{QQ}
\nabla E_{K_i}= \mathbf K_i\nabla E=0, \quad i=1,\ldots,N-1,
\end{equation}
whose integrability conditions are
$$
d (\mathbf K_i d E)=0.
$$
Since the $\mathbf K_i$ are Killing tensors,  equation (\ref{QQ}) means that $$\{H-\tfrac{\hbar^2}{2}E,K_i-\tfrac{\hbar^2}{2}E_{i}\}=0,$$ and, in separable coordinates $(q^j)$ for the  St\"ackel system $(H, K_1 \ldots, K_{N-1})$, that $E=\phi_j(q^j)g^{jj}$, i.e.  $E$ is a St\"ackel multiplier \cite{BCRI}, as well as $-\frac{\hbar^2}{2}E$. Therefore, also the commutation conditions  $$\{H_i-\tfrac{\hbar^2}{2}E_{i}, H_j-\tfrac{\hbar^2}{2}E_{j}\}=0$$ follow and  $( H-\frac{\hbar^2}{2}E, K_{1}-\frac{\hbar^2}{2}E_{1} \ldots, K_{N-1}-\frac{\hbar^2}{2}E_{N-1} )$ is again a St\"ackel system. Since the Killing tensors $\mathbf K_i$ and the Ricci tensor are unchanged, the pre-Robertson condition still holds true for the new system and the separable coordinates are the same. Therefore, 
 the quantized system obtained from it, that coincides with  $(\widehat H_E, \widehat K_{E_1}, \ldots, \widehat K_{E_{N-1}})$,  is quantum integrable. If the Robertson condition $\mathbf{C}_{K_i}=0$, $i=1,\ldots,N-1$, holds for the original system, it holds clearly for the system modified by the potentials $-\frac{\hbar^2}2E_i$, then the modified quantum system is also separable in $(q^i)$ \cite{BCRI}. \qed


\section{Examples}
We show several detailed applications of the modified quantization. The following examples are
inspired by the papers  \cite{Rad} and \cite{KKMncf}	
The first two examples show how the freedom in the choice of the quantum correction for the Hamiltonian allows to construct commuting 
symmetry operators:  two geodesic Hamiltonians admitting a quadratic first integral considered in \cite{Rad} are analysed. 
In \cite{Rad} these examples where used  to show how to apply the compatibility condition
between the metric and the (conformal) Killing tensor. This condition means that
a specific one-form, called obstruction,  
 is closed:
\begin{equation}\label{Radcom}
d\left(
(
C_{\ st}^{r\ a}\nabla_r 
- 3 A^{\ a}_{st})K^{st}g_{ab}dq^b\right)=0,
\end{equation}
where $K^{ab}$ are the contravariant components
of the (conformal) Killing tensor associated with the first integral, while  $\mathbf{A}$ and $\mathbf{C}$ are  the Cotton-York tensor and the Weyl tensor, respectively.
The scalar function $f$, local potential of the 
above closed one-form, has to be added to the standard conformally-equivariant quantization of the quadratic first integral in order to ensure the commutation with the conformally invariant Laplacian.
When the obstruction is not closed, the first integral is not assocated with a second order symmetry of the conformally invariant Laplacian.
Nevertheless,  as shown in the examples below, a different scalar $E$ in the modified quantization could be used to create a different quantum Hamiltonian with a second-order symmetry operator. 

{

\subsection{Example 1}
Let us consider the following three-dimensional St\"ackel metric with an ignorable coordinate $q^1$
\begin{equation}\label{rad1}
H=\frac{1}{2}p_1^2+\frac{1}{u(q^2)+v(q^3)}\left(\frac{1}{2}p_2^2+\frac{1}{2}p_3^2\right),
\end{equation}
and its quadratic first integral
\begin{equation}
K=\frac{1}{u(q^2)+v(q^3)}\left(v(q^3)p_2^2-u(q^2)p_3^2\right).
\end{equation}
Since the Ricci tensor is diagonal for all functions $u$ and $v$, by Remark \ref{rd}, the ordinary Laplace-Beltrami quantization produces commuting quantum operators and no quantum correction is necessary, i.e., $E=0$ and $E_K=0$ satisfy the compatibility condition (\ref{se}).

On the contrary, since the metric is not conformally flat for arbitrary functions $u$ and $v$, we see that the quantum correction given by the scalar curvature
\begin{equation*}E= \Sc=\frac{1}{u(q^2)+v(q^3)}(\partial^2_{q^2}+\partial^2_{q^3})\ln(u(q^2)+v(q^3)),
\end{equation*}
is not compatible with $\mathbf K$. Indeed, compatibility condition (\ref{icS}) reduces to $d\mathbf{K}dE=0$ which is satisfied if and only if
$$(\partial^2_{q^2}+\partial^2_{q^3})\partial_{q^2}\partial_{q^3}\ln(u(q^2)+v(q^3))=0.$$
This analysis is in accordance with the results of \cite{Rad} where it is shown that $\mathbf K$ is not associated with a symmetry of the conformally invariant Laplacian, when the above equation is not satisfied.
\subsection{Example 2}
Let us consider the following geodesic Hamiltonian  \begin{equation}\label{er2}
H=\frac{1}{2}p_r^2+\frac{1}{2}p_z^2+
\frac{1}{2}\left(\frac{1}{r^2}-\frac{a^2}{z^2}\right)p_\phi^2,
\end{equation}
obtained as a reduction of the Minkowski metric on $\mathbb{M}^4$ along the Killing vector $\mathbf X$ defined in pseudo-Cartesian coordinates $(x^0,x^1,x^2,x^3)$ as
$\mathbf X=(x^1\partial_0+x^0\partial_1)+a(x^1\partial_2-x^2\partial_1)$, with $a \in \mathbb{R}-\{0\}$
(see \cite{Rad} and references therein).
The metric of $H$ is conformal to the metric tensor of (\ref{rad1}) with $u(r)=1/r^2$ and $v(z)=-a/z^2$ and it is again a St\"ackel metric. Indeed, 
a metric conformal to a St\"ackel metric is of St\"ackel
type if and only if the conformal factor is of the form $\sum_i f_i(q^i)g^{ii}$ (see \cite{BCRHJ0}). In this example
 the conformal factor relating the two metrics is  $g^{22}$.
It is easy to check directly that the function
\begin{equation}
K=\frac{1}{2}p_r^2+\frac{1}{2r^2}p_\phi^2,
\end{equation}
is a quadratic first integral of $H$.
The Ricci tensor of the metric of (\ref{er2}) is not diagonal in $(r, \phi,z)$ and the covariant form of $\delta \mathbf{C}_K$ is
$$
(\delta \mathbf{C}_K)^\flat=-\frac{3a^2}{a^2 r^2-z^2}\big(r(2a^2r^2+3z^2)dr+z(2a^2z^2+3r^2)dz\big), 
$$
that is different from zero.
 
However, the quantum correction equation (\ref{se}) can be solved
for
\begin{equation}\label{ER}
E=-\frac{1}{8}\Sc-\frac{(a+1)^2}{4}\frac{a^2 r^2+z^2}{a^2 r^2-z^2},
\end{equation}
$$
E_K=-\frac{a^2(a^2r^2+4z^2)}{4(a^2r^2-z^2)^2},
$$
where 
$$\Sc=\frac{6a^2(r^2+z^2)}{(a^2r^2-z^2)},$$ 
is the scalar curvature of the metric tensor of (\ref{er2}).
Also in this case we recover the fact stated in \cite{Rad} that $\mathbf K$ is not a symmetry of the conformally invariant Laplacian. Indeed, the  quantum correction (\ref{ER}) requires an additional term to $-\Sc/8$ (the correction which transforms a Laplacian in the conformally invariant Laplacian in a three-dimensional manifold) in order to satisfy the compatibility condition (\ref{icS}).
}

\subsection{Example 3}

In \cite{KKMncf} the authors introduce a class of $N$-dimensional Hamiltonians, based upon the Tremblay-Turbiner-Winternitz system,  that are superintegrable St\"ackel systems, therefore admitting $N$ quadratic first integrals associated with separation of variables, and $N-1$ other independent first integrals. These last constants of motion are polynomial in the momenta of degree depending on a rational parameter.
These Hamiltonians are  defined in non-conformally-flat Riemannian manifolds for $N>3$, thus, their
quantization requires quantum corrections in order to preserve the superintegrability of the corresponding quantum systems. In \cite{KKMncf} is explicitly discussed the $four$-dimensional case and it is found that the quantum correction is a linear combination of the scalar curvature with an additional scalar determined from the Weyl tensor.

Here, we review the example of \cite{KKMncf} step by step, under the assumption that also the other $N-1$ first integrals are quadratic in the momenta. Given the Hamiltonian 
\begin{eqnarray}\label{KMH}
	H = L_4 = p^2_r + \alpha r^2 +\frac {L_3}{r^2}, \\
	L_3 = p^2_{\theta_1}+
	\frac{\beta_1}{\cos^2(k_1\theta_1)}+
	\frac{L_2}{\sin^2(k_1\theta_1)},\\
	L_2 = p^2_{\theta_2}+
	\frac{\beta_2}{\cos^2(k_2\theta_2)}+
	\frac{L_1}{\sin^2(k_2\theta_2)},\\
	L_1 = p^2_{\theta_3} +\frac{\beta_3}{\cos^2(k_3\theta_3)}+
	\frac{\beta_4}{\sin^2(k_3\theta_3)},
\end{eqnarray}
we examine the classical and quantum second-order superintegrability of all $n$-dimensional subsystems ($n=2,\ldots,4$) with $\beta_1=\ldots=\beta_4=\alpha=0$ (we do not loose in generality since equation (\ref{se}) is unaffected by the addition of scalar potentials to the classical system) and determine the quantum corrections.

In the following, we denote by $H_i$ the quadratic first integrals associated with separation of variables, that are all in involution, and by $K_i$ the other quadratic first integrals.

The Hamiltonian (\ref{KMH}) admits three additional independent second-degree first integrals for $k_l=2^l$, $l=1,\ldots,3$. Indeed,
by setting $q^{4-j}=k_j\theta_j,$ $p_{4-j}=p_{\theta_j}/k_j$, $j=1,\ldots,3$, $q^4=r$, $p_4=p_r$, the Hamiltonian $H_4=H/2$ becomes 
\begin{equation}\label{H4}
H_4=\frac 12 p_4^2+ \frac{4}{(q^4)^2}\left(\frac 12 p_3^2+\frac 4{\sin ^2 q^3}\left( \frac 12 p_2^2+\frac 4{\sin ^2 q^2}\left(\frac 12 p_1^2\right)\right)\right).
\end{equation}
{It can be proved that the $H_i$ can be obtained recursively from $H_1=p_1^2/2$ by applying the procedure of extension (see \cite{CDRttw} and references therein). 
Roughly speaking, this procedure 	
 transforms a Hamiltonian $L$ into a Hamiltonian
 of the form
  $$H= \frac{1}{2}p_u^2+\frac{m^2}{n^2}\alpha(u)L+f(u),$$
   with
one more degree of freedom and admitting a polynomial  first integral $K$ of degree depending on $\frac mn$ computable through a recursive procedure. For $\frac mn=2$,  $K$ is quadratic. (for more about the extension procedure, see \cite{CDRfi, CDR, CDRfrac, CDRttw, CDR1} ) We leave for a next paper the details of the construction of suitable quantum corrections for a general extension with quadratic first integrals.
}

\noindent
\textbf{The two-dimensional system.}
The Hamiltonian
\begin{equation}\label{H2}
H_2=\frac 12 p_2^2+\frac 4{\sin ^2 q^2}\left(\frac 12 p_1^2\right),
\end{equation}
is defined on a two-dimensional constant curvature manifold and admits (together with $H_2$ itself) the first integrals
\begin{equation}\label{H1}
H_1=\frac 12 p_1^2,
\end{equation}
\begin{equation}\label{K2}
K_2=\cos (q^1) p_2^2-4\sin (q^1) \tan ^{-1} (q^2) p_1p_2-4 \cos (q^1) \tan^{-2} (q^2) p_1^2.
\end{equation}
Both $H_1$ and $K_2$ are associated with commuting operators through the standard minimal quantization. Indeed, any constant curvature manifold is an Einstein manifold and therefore we have $\mathbf{C}_K=0$ for any symmetric Killing two-tensor $\mathbf{K}$.

\noindent
\textbf{The three-dimensional system.}
Let us consider the three-dimensional Hamiltonian
\begin{equation}\label{H3}
H_3= \frac{1}{2}p_3^2+\frac{4}{\sin^2 q^3}H_2,
\end{equation}
which is defined on a conformally flat manifold with non constant curvature
$$
\Sc_3=-6\left(1+\frac{1}{\sin^2 q^3}\right),
$$
with  Ricci tensor diagonal in these coordinates. 
Four additional  first integrals are (\ref{H1}), (\ref{H2}), (\ref{K2}) 
 and 
 \begin{equation}\label{K3}
 K_3=\cos (q^2) p_3^2-4\sin (q^2) \tan ^{-1} (q^3) p_2p_3-8 \cos (q^2) \tan^{-2} (q^3) H_2.
 \end{equation} 
The Killing tensors $\mathbf{H}_1$, $\mathbf{H}_2$, $\mathbf{K}_2$ associated with  $H_1$, $H_2$ and $K_2$ satisfy the Carter condition and would not need quantum corrections, while
the Killing tensor associated with $K_3$ does not  satisfy the Carter condition and therefore a quantum correction is needed.  
Because of the conformal flatness, all  the  Killing tensors  associated with the quadratic in the momenta first integrals satisfy equation (\ref{icS})  of Proposition 2, with $E=-\frac{1}{8}\Sc_3$, in agreement with \cite{Rad}.
Equation (\ref{se})  is satisfied by the following quantum corrections for $\widehat{H}_3$ and $\widehat{K}_3$ respectively 
\begin{equation}\label{E3-3d}
E_3=E(H_3,K_3)= -\frac{1}{8}\Sc_3=-\frac{3}{4}\left(1+\frac{1}{\sin^2 q^3}\right),
\end{equation}
\begin{equation} \label{EK3-3d}
E_{K_3}=E(K_3,H_3)=-\frac{1}{2}\cos q^2(5 \tan^{-2}(q^3) +2).
\end{equation}
Now, in order to get simultaneous quantization for all first integrals, according to Theorem \ref{SepQC},
we need to check that $E_3$ satisfies 
$$d\mathbf{H}_1 dE_3=d\mathbf{H}_2 dE_3=d\mathbf{K}_2 dE_3=0,$$
 for the Killing tensors $\mathbf{H}_1$, $\mathbf{H}_2$, $\mathbf{K}_2$. However, since $E_3$ depends on $q^3$ only and the first integrals of $H_2$ do not involve $p_3$, we have  
$$
\mathbf{H}_1 dE_3=\mathbf{H}_2 dE_3=\mathbf{K}_2 dE_3=0,
$$ 
and, therefore, no quantum corrections are needed for $\widehat{H}_2$, $\widehat{H}_1$, and $\widehat{K}_2$. 
Thus, we get that
$$
[\widehat{H}_3-\tfrac{\hbar^2}{2}E_3,\widehat{K}_3-\tfrac{\hbar^2}{2}E_{K_3}]= [\widehat{H}_3-\tfrac{\hbar^2}{2}E_3,\widehat{K}_2]=0,
$$
$$
[\widehat{H}_3-\tfrac{\hbar^2}{2}E_3,\widehat{H}_2]= [\widehat{H}_3-\tfrac{\hbar^2}{2}E_3,\widehat{H}_1]=0.
$$

\noindent
\textbf{The four-dimensional system.}
The four-dimensional Hamiltonian (\ref{H4}) has the additional first integral 
\begin{equation}\label{K4}
K_4=\cos (q^3) p_4^2-4\sin (q^3) (q^4) ^{-1}  p_3p_4-8 \cos (q^3) (q^4)^{-2}  H_3,
\end{equation}
and the five constants of motion (\ref{H2}--\ref{K3}) inherited from the three-dimensional Hamiltonian.
Its underlying manifold it is not conformally flat, since the Weyl  tensor does not vanish, but has diagonalized Ricci tensor in the $(q^i)$ (recall that this means that the Robertson condition is satisfied, hence $(q^i)$ are separable coordinates also for the Laplace-Beltrami operator). Following \cite{KKMncf} we introduce the Weyl scalar
$$
W_4=\sqrt{3 W_{abcd}W^{abcd}}=\frac{24}{(q^4)^2\sin^2(q^3)}.
$$
The quantum corrections for quantizing $H_4$ and $K_4$ (as we will show elsewhere in a more general context) are
$$
E_4=E(H_4,K_4)=-\frac{1}{6}\Sc_4= \frac{4+3\sin^2 (q^3)}{(q^4)^2\sin^2(q^3)},
$$

$$
E_{K_4}=E(K_4,H_4)= -2\cos q^3 \,\frac{4+3\sin^2(q^3)}{(q^4)^2\sin^2(q^3)},
$$
while a pair of quantum corrections allowing simultaneous quantization for 
$H_4$ and $K_3$ is
$$
\bar{E}_4=E(H_4,K_3)=\frac{3}{(q^4)^2}\left(1+\frac{1}{\sin^2 q^3}\right),
$$
$$
\bar{E}_{K_3}=E(K_3,H_4)=-\frac{1}{2}\cos q^2(5 \tan^{-2}(q^3) +2).
$$
We remark that $\bar{E}_{K_3}$ coincides with (\ref{EK3-3d}), while $\bar{E}_4$ is the product of (\ref{E3-3d}) by $-4(q^4)^{-2}$, which is the coefficient of $H_3$ in its extension $H_4$. 
All other Killing tensors satisfy Carter condition and do not need quantum correction.
Hence, according to Theorem \ref{propB1}, in order
to get simultaneous modified quantization
we look for a function $E$ such that
$$
d\mathbf{K}_4 d(E_4-E)=d\mathbf{K}_3 d(\bar{E}_4-E)=
0,
$$
and 
$$
d\mathbf{K}_2 dE=0, \qquad d\mathbf{H}_i dE=0, \qquad (i=1,\ldots, 3).
$$
By solving the first two PDEs for $E$ with the assumption that $E$ does not depend on $q^1$, we get
$$
E=\frac{3}{(q^4)^2}+\frac{3}{(q^4)^2\sin^2(q^3)}+
\frac{C_1+C_2 \cos(q^2)}{(q^4)^2\sin^2(q^3)\sin^2(q^2)}+
C_3(q^4)^2+ C_4,
$$
where $C_1,\ldots,C_4$ are arbitrary constants.
We observe that for $C_i=0$ we get the quantum correction used in \cite{KKMncf} which is a linear combination of the scalar curvature and of the Weyl scalar.
By imposing the further condition $d\mathbf{K}_2dE=0$
we get
$$
(\cos^4 q^2-6\cos^2 q^2-3)C_2 -8C_1\cos q^2=0.
$$
Thus, we need $C_1=C_2=0$. Furthermore, for this choice of the constants $C_i,$
we have $\mathbf{H}_1dE=\mathbf{H}_2dE=\mathbf{K}_2dE=0$ and
$d\mathbf{H}_3dE=0$.
Hence, the required quantum correction for $\widehat{H}_4$ is
\begin{equation*}
E=\frac{3}{(q^4)^2}+\frac{3}{(q^4)^2\sin^2(q^3)}+
C_3(q^4)^2+ C_4.
\end{equation*}
The modified Hamiltonian 
operator ($C_4\in\mathbb{R}$)
\begin{equation}
\widehat{H}_4-\frac{\hbar^2}{2}\left(3\frac{1+\sin^2 q^3}{(q^4)^2\sin^2 q^3}+
C_4(q^4)^2\right),
\end{equation}
commute with  the six following modified operators:
\begin{eqnarray}
&&\widehat{K}_4-{\hbar^2}\cos q^3\left(
C_4 (q^4)^2- 3\frac{1+\sin^2 q^3}{(q^4)^2\sin^2 q^3}\right), \\
&&\widehat{K}_3 +\frac{\hbar^2}{4}\cos q^2(5 \tan^{-2}(q^3) +2), \label{K3c}\\
&&\widehat{K}_2, \\
&&\widehat{H}_3-\frac{3\hbar^2}{8\sin^2 q^3 }, \label{H3c}\\
&&\widehat{H}_2, \\
&&\widehat{H}_1. \label{H1c}
\end{eqnarray}

\begin{rmk} \rm
Operators (\ref{K3c}--\ref{H1c}) are the five symmetry operators of (\ref{H3c}) determined in the three-dimensional case.
\end{rmk}

\noindent
\textbf {The 5-dimensional system.}
 As a further step we consider a Hamiltonian $H_5$ with five degrees of freedom.
In this case, to apply the extension procedure, the four dimensional Hamiltonian (\ref{H4}) must be replaced
by
\begin{equation}\label{H4b}
{H'}_4=\frac 12 p_4^2+ \frac{4}{\sin^2 q^4}\left(\frac 12 p_3^2+\frac 4{\sin ^2 q^3}\left( \frac 12 p_2^2+\frac 4{\sin ^2 q^2}\left(\frac 12 p_1^2\right)\right)\right),
\end{equation}
that admits the first integrals
\begin{equation}\label{K4b}
{K'}_4=\cos (q^3) p_4^2-4\sin (q^3) \tan^{-1}(q^4)  p_3p_4-8 \cos (q^3) \tan^{-2}(q^4)  H_3,
\end{equation}
and the five constants of motion (\ref{H2}--\ref{K3}) inherited from the three-dimensional case.
The underlying manifold of $H'_4$  is not conformally flat, since the Weyl tensor is not zero, but has  Ricci tensor diagonalized in the $(q^i)$. The Weyl scalar is
$$
{W'}_4=\sqrt{3 {W'}_{abcd}{W'}^{abcd}}=\frac{24}{\sin^2(q^4)\sin^2(q^3)}.
$$
The quantum corrections for quantizing ${H'}_4$ and ${K'}_4$  (as we will show elsewhere in a more general context) are
$$
{E'}_4=E({H'}_4,{K'}_4)=-\frac{1}{6}\Sc_4= \frac{4+3\sin^2 q^3}{\sin^2 q^4 \sin^2 q^3} +2,
$$

$$
{E'}_{K_4}=E({K'}_4,{H'}_4)= -2\cos q^3 \,\left(\frac{4+3\sin^2 q^3}{\tan^2 q^4 \sin^2 q^3 }+1\right),
$$
while, a pair of quantum corrections allowing simultaneous quantization for
${H'}_4$ and $K_3$ is
$$
\bar{{E'}}_4=E({H'}_4,K_3)=\frac{3}{\sin^2 q^4}\left(1+\frac{1}{\sin^2 q^3}\right),
$$
$$
\bar{{E}}_{K_3}=E(K_3,{H'}_4)=E({K}_3,H_3)=-\frac{1}{2}\cos q^2(5 \tan^{-2} q^3 +2).
$$
By repeating similar computations of above, we find that a quantum correction allowing simultaneous quantization of $H'_4$ with all the other first integrals is given by 
$$
E'=\frac{3}{\sin^2 q^4}+\frac{3}{\sin^2 q^4 \sin^2 q^3}+
C_4\tan^2 q^4,
$$ 
with $C_4\in\mathbb{R}$.
The modified Hamiltonian operator
\begin{equation} \label{H4c}
\widehat{{H'}}_4-\frac{\hbar^2}{2}\left(\frac{3}{\sin^2 q^4}\left(1+\frac{1}{\sin^2 q^3}\right)+
C_4\tan^2 q^4\right),
\end{equation}
commute with  the modified operator
\begin{equation}\label{K4c}
\widehat{{K'}}_4-{\hbar^2}\cos q^3\left(
C_4 \tan^{-2} q^4-\tan^2 q^4 -3 - \frac{3}{\sin^2 q^3}\right), \\
\end{equation}
and with the four operators (\ref{K3c} -- \ref{H1c}).

The five dimensional extended Hamiltonian is therefore
\begin{equation}\label{H5}
{H}_5=\frac 12 p_4^2+\frac{1}{(q^5)^2}\left(\frac 12 p_4^2+ \frac{4}{\sin^2 q^4}\left(\frac 12 p_3^2+\frac 4{\sin ^2 q^3}\left( \frac 12 p_2^2+\frac 4{\sin ^2 q^2}\left(\frac 12 p_1^2\right)\right)\right)\right),
\end{equation}
with first integrals
\begin{equation}\label{K5}
{K}_5=\cos q^4 p_5^2-4\sin q^4 (q^5)^{-1}  p_4p_5-8 \cos q^4 (q^5)^{-2} {H'}_4,
\end{equation}
and the seven constants of motion of ${H'}_4$.
The underlying manifold has the scalar curvature and the Weyl scalar given by
$$
\Sc_5= -12\frac{6\sin^2 q^3 +8+3\sin^2 q^4 \sin^2 q^3 }{(q^5)^2\sin^2 q^4\sin^2 q^3},
$$
\begin{equation}\label{w5}
W_5=4\sqrt{6}\frac{\sqrt{3\sin^4 q^3 + 8 \sin^2 q^3 + 48}}{(q^5)^2\sin^2 q^4\sin^2 q^3}.
\end{equation}
We can again find a quantum correction
for all second order operators
and a simultaneous quantum correction 
\begin{equation}\label{e5}
E=C_5(q^5)^2+\frac{27}{ 4(q^5)^2}+\frac{12}{(q^5)^2\sin^2 q^4}\left(1+\frac{1}{\sin^2 q^3} \right),
\end{equation}
with $C_5 \in\mathbb{R}$,
such that $\widehat{H}_5 - \frac{\hbar^2}{2} E$ commutes with the modified operator
$$
\widehat{K}_5- {\hbar^2} \cos q^4\left(C_5(q^5)^2 -\frac{21}{ 4(q^5)^2}-\frac{12}{(q^5)^2\sin^2 q^4}\left(1+\frac{1}{\sin^2 q^3} \right)\right),
$$
with the operators (\ref{H4c}) and (\ref{K4c})  for $C_4=0$ and
with the four operators (\ref{K3c} -- \ref{H1c}).

\begin{rmk} \rm
In dimension 4 the simultaneous quantum correction for $C_4=0$ is a linear combination of $\Sc_4$ and $W_4$
(see also \cite{KKMncf}),
while in dimension 5 any simultaneous quantum correction (\ref{e5}) of $H'_4$ with all the other first integrals is not a rational function of $\Sc_5$ and $W_5$.
\end{rmk}

\begin{rmk} \rm
The construction of above can be iterated until dimension $N$, with for $i<N$

$$
H_i=\frac 12 p_i^2+\frac 4{\sin ^2(q^i)}H_{i-1},
$$
$$
K_i=\cos (q^{i-1}) p_{i}^2-4\sin (q^{i-1}) \tan ^{-1} (q^i) p_ip_{i-1}-8 \cos (q^{i-1}) \tan^{-2} (q^i) H_{i-1}, 
$$
and for $i=N$
 $$
 H_N=\frac 12 p_N^2+\frac 4{(q^N)^2}H_{N-1},
 $$
$$
K_N=\cos (q^{N-1}) p_{N}^2-4\sin (q^{N-1}) (q^N)^{-1}  p_Np_{N-1}-8 \cos (q^{N-1}) (q^N)^{-2}  H_{N-1}.
$$
It is easy to see that in these coordinates the metric tensor of $H_N$ has diagonal Ricci tensor, thus the $\widehat{H}_i$ do not need quantum corrections. We leave for a further paper a more detailed discussion of the general case.
\end{rmk} 

\section{Conclusions}
By Theorems \ref{T1}, \ref{propB1} and \ref{SepQC} we characterize those natural Hamiltonian systems with quadratic in the momenta constants of motion that are preserved as dynamical constants  during the Laplace-Beltrami quantization, up to the addition of scalar potentials. The examples illustrate both the generality and the usefulness of our approach, not restricted to any a priori assumption of the Hamiltonian quantum correction. In example 3, we consider  a family of  Hamiltonians, arising from the procedure of "extension" of lower-dimensional natural Hamiltonians, admitting a quadratic in the momenta  constant of motion generated by the extension procedure. In a future article we will study the LB quantization of generic extended Hamiltonians with a similar quadratic constant of motion. Some of the results of this study are already introduced here in example 3. We remark that any extended Hamiltonian depends on some freely chosen rational number $k$ and  admits  a constant of motion, polynomial in the momenta, of degree determined by $k$. Therefore, extended Hamiltonians represent an excellent source of examples for any quantization theory, not restricted only to polynomials quadratic in the momenta.  see for example \cite{R}.

\section*{Acknowledgements}

The authors are grateful to Jonathan Kress for suggesting to apply modified quantization to example 2 and to Francisco Herranz for introducing us to the theory of modified LB quantization.


\end{document}